\documentclass[11pt]{article}

\usepackage[T1]{fontenc}
\usepackage[utf8]{inputenc}
\usepackage{lmodern}
\usepackage{microtype}
\usepackage[margin=1in]{geometry}
\usepackage{amsmath,amssymb,amsthm}
\usepackage{booktabs}
\usepackage{tabularx}
\usepackage{array}
\usepackage{graphicx}
\usepackage{natbib}
\usepackage[hidelinks]{hyperref}
\usepackage[nameinlink,noabbrev]{cleveref}

\newcommand{\tcu}{temporal--causal unity}

\newcommand{\dd}{\mathrm{d}}
\newcommand{\wrap}{\operatorname{wrap}}
\newtheorem{definition}{Definition}
\newtheorem{proposition}{Proposition}

\newcolumntype{Y}{>{\raggedright\arraybackslash}X}
\newcolumntype{P}[1]{>{\raggedright\arraybackslash}p{#1}}

\title{\textbf{Temporal--Causal Unity as an Operational Framework for Collective Dynamics}\\
\large Causal-Progress Clocks, Synchronization, and Polarization}
\author{%
  Jian Liu$^{1}$ and Dong Sun$^{1}$\\
  \small $^{1}$Xintong Zhima (Beijing) Technology Co., Ltd., Beijing, China\\
  \small \texttt{liujian@bonideas.cn} \quad \texttt{aihubxpro@gmail.com}%
}
\date{July 20, 2026}

\begin{document}
\maketitle

\begin{abstract}
This paper develops temporal--causal unity (TCU), a framework connecting a
process-philosophical thesis---time is the ordered unfolding of causal
change---to an operational model of cognitive and social dynamics. The
framework deliberately separates three claims: an interpretive thesis about
becoming, a measurable causal-progress coordinate, and a stochastic network
model. Causal progress is defined by
$\tau(t)=\int_0^t\lambda(s\mid\mathcal H_s)\,\dd s$, where the nonnegative
event intensity $\lambda$ must be specified independently of the outcome.
Agents carry an orientation phase and an activation amplitude; weighted
interaction, heterogeneous drift, external input, anchoring, and diffusion
govern their evolution in $\tau$. First- and second-harmonic order parameters
separate consensus from bipolar polarization. For the all-to-all noisy
Kuramoto special case with Lorentzian drift width $\Delta$, synchronization
begins at the conditional threshold $K_c=2(\Delta+D)$, not at a universal
constant. Reproducible numerical illustrations \emph{illustrate} (not empirically demonstrate) this threshold,
causal-clock curve collapse, and the consensus--polarization distinction. Six
historical episodes are treated as scope probes rather than validation data.
The paper derives falsifiable hypotheses and an out-of-sample protocol for
comparing causal-progress and chronological-time models. TCU is therefore
offered as a disciplined bridge between process ontology and complex-systems
modeling, not as a replacement for spacetime physics or as an empirically
established identity between time and causation.
\end{abstract}

\noindent\textbf{Keywords:} time and causation; process ontology; causal
progress; synchronization; collective dynamics; polarization; social networks

\section{Introduction}
\label{sec:intro}

Time and causation are usually represented by different objects. A physical
or social process is indexed by a time coordinate, while causal relations are
encoded by laws, interventions, mechanisms, or directed dependencies. This
division is productive, but it leaves a conceptual question: if change is what
gives temporal order empirical content, can temporal passage and causal
unfolding be treated as two descriptions of a single process? Process
philosophy and phenomenology motivate such a view
\citep{heidegger1962,whitehead1978}; work on the thermodynamic arrow emphasizes
the relation between temporal asymmetry and irreversible processes
\citep{reichenbach1956,prigogine1980,price1996}. None of these traditions, by
itself, supplies an operational model for collective cognition.

At the same time, coupled-oscillator, threshold, and opinion-dynamics models
provide precise accounts of coordination, cascades, and collective order
\citep{kuramoto1984,granovetter1978,degroot1974,watts2002,castellano2009}.
Their ordinary time coordinate, however, often hides an important empirical
fact: equal calendar intervals can contain very different amounts of
interaction, exposure, decision, and institutional change. A week with one
salient event and a week with thousands of mutually reinforcing events need
not represent equal progress through a collective transition.

This paper develops \emph{\tcu} (TCU) as a bridge between these levels. Its
central interpretive thesis is:
\begin{quote}
\emph{Time is not an empty container in which causes operate; experienced
temporal order is the ordered unfolding of causal change.}
\end{quote}
The thesis is not presented as a theorem of physics. It becomes scientifically
useful only after an independently measurable progress coordinate and a model
with falsifiable consequences are supplied.

The paper makes four contributions.
\begin{enumerate}
  \item It separates an ontological interpretation, an operational
  causal-progress clock, and a stochastic dynamical model. This separation
  prevents a metaphor from being mistaken for a measurement claim.
  \item It replaces an unconstrained time--cause plane with a monotone event
  coordinate $\tau$ whose intensity must be estimated without using the
  outcome it is meant to explain.
  \item It formulates a phase--activation network model and corrects two common
  category errors: interaction weights are not automatically causal effects,
  and high synchrony is not automatically polarization.
  \item It gives conditional analytical results, reproducible numerical
  illustrations, scope probes, and a protocol by which the framework could
  fail.
\end{enumerate}

The epistemic status is therefore modest but nontrivial. TCU does not derive
the arrow of time, reduce relativistic spacetime, import quantum measurement
into cognition, or validate a social theory by redescribing historical events.
It proposes a precise research program: specify a causal clock before observing
the target trajectory, fit a constrained dynamical model, and test whether that
clock improves prediction and cross-context invariance.

\section{From an Ontological Thesis to an Operational Framework}
\label{sec:foundations}

\subsection{Three levels of claim}

The phrase \emph{time is causation} can express several logically distinct
claims. Conflating them would make the framework immune to evidence. TCU
therefore uses the hierarchy in \cref{tab:levels}.

\begin{table}[t]
\centering
\caption{Three levels of TCU. Only the operational and model levels generate
direct statistical tests.}
\label{tab:levels}
\begin{tabularx}{\textwidth}{P{0.17\textwidth}YY}
\toprule
Level & Claim & Evidential status \\
\midrule
Interpretive &
Temporal passage and directed causal becoming are two aspects of process. &
A philosophical postulate assessed by coherence, scope, and relation to
existing accounts of time.\\
Operational &
The progress of a process can be indexed by a monotone accumulation of
prespecified events or hazards. &
Testable through measurement validity, curve collapse, and out-of-sample
prediction.\\
Dynamical &
Orientations and activations evolve through heterogeneous drift, network
coupling, input, anchoring, and diffusion in causal progress. &
Testable against alternative stochastic network models and null networks.\\
\bottomrule
\end{tabularx}
\end{table}

The interpretive level has three postulates.
\begin{description}
  \item[TCU-1 (duality).] Temporal order is the order in which causal change is
  realized; causal direction is what distinguishes earlier from later within a
  process.
  \item[TCU-2 (trace).] Memory, records, institutions, and material structures
  are present traces of prior process, not literal persistence of the past.
  \item[TCU-3 (slice).] A state at a time is a cross-section of an unfolding
  process. Spatial, informational, cognitive, and institutional descriptions
  are different observables on that cross-section.
\end{description}
These postulates preserve the original intuition while avoiding the stronger
claim that physical space is mathematically eliminable. Relativistic causal
structure remains untouched. Results showing that causal order constrains
substantial spacetime structure under specific assumptions do not establish a
literal identity between time and causation \citep{malament1977}.

\subsection{The causal-progress clock}

Let $t$ denote chronological time and $\mathcal H_t$ the history of observable
events before $t$. Let $\lambda(t\mid\mathcal H_t)\geq 0$ be a prespecified
intensity of process-relevant events.

\begin{definition}[Causal-progress coordinate]
\label{def:clock}
For a process beginning at $t=0$, its causal-progress coordinate is
\begin{equation}
  \tau(t)=\int_0^t \lambda(s\mid\mathcal H_s)\,\dd s .
  \label{eq:clock}
\end{equation}
For discrete observations one may use
$\tau_n=\sum_{m\leq n}w_m$, where $w_m\geq0$ is a preregistered weight for event
$m$.
\end{definition}

This construction resembles compensator or rescaled time in event-process
analysis \citep{brown2002}, but TCU gives it a specific substantive
interpretation: $\tau$ measures progress through a hypothesized causal process.
Candidate events include exposures, replies, protests, policy decisions, or
verified institutional actions. Raw message volume is not necessarily a good
clock; it may be duplicated, automated, or endogenous to the response.

\begin{proposition}[Reparameterization limitation]
\label{prop:reparam}
If $\lambda(t)>0$ is allowed to be chosen freely after the outcome trajectory
is observed, replacing $t$ by $\tau(t)$ has no empirical content.
\end{proposition}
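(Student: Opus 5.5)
The plan is to make the informal claim precise and then show that a freely chosen clock can make any observed trajectory match any target dynamical law. Formally, I would say that the substitution $t\mapsto\tau(t)$ has no empirical content if, for every observed trajectory $x:[0,T]\to\mathbb R^d$, or scalar summary, and every candidate model curve $f$ in a broad class, there is an admissible intensity $\lambda>0$ with $x(t)=f(\tau(t))$ for all $t$. When that holds, no trajectory could be inconsistent with the reparameterized model, so the model excludes nothing and cannot be falsified.

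The core step is a construction. Take an observed scalar summary $y(t)$, for example an order parameter. Assume for the main case that $y$ is continuous and strictly monotone on $[0,T]$. Take any continuous, strictly monotone model curve $f:[0,\infty)\to\mathbb R$ whose range covers $y([0,T])$, with the same direction of monotonicity. I would set
\[
\tau(t)=f^{-1}(y(t)).
\]
This $\tau$ is strictly increasing. After a shift, or by choosing $f$ with $f(0)=y(0)$, it satisfies $\tau(0)=0$. If $y$ and $f$ are $C^1$ with nonvanishing derivatives, then
\[
\lambda(t)=\dot\tau(t)=\frac{\dot y(t)}{f'(\tau(t))}>0 .
\]
By Definition~\ref{def:clock}, this $\lambda$ generates $\tau$, and $y(t)=f(\tau(t))$ holds identically. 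Two consequences follow. First, curve collapse across several processes is guaranteed: choose a separate $\lambda_i$ for each process, all targeting the same $f$. Second, any proposed functional law in $\tau$ fits perfectly. This is exactly the vacuity the proposition asserts.

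The main obstacle is non-monotone or multidimensional trajectories. A single time change cannot turn a curve that visits the same value twice into an arbitrary $f$. Rather than claim full universality, I would handle these cases in two ways. First, I would note that a positive $\lambda$ can still arbitrarily relocate and stretch every monotone segment and every event time. The set of trajectories reachable by a free clock is therefore the full orbit of $x$ under increasing homeomorphisms, so any claim that depends only on timing or rate, such as thresholds crossed at given $\tau$ or rate-based collapse, is rendered unfalsifiable. Second, when $\lambda$ may depend on $\mathcal H_t$ and the model has free parameters, I would observe that the time-change freedom absorbs the entire time-rescaling equivalence class, and the residual testable content is only the sequence of visited states. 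That sequence belongs to the trajectory, not to the clock. Either way, replacing $t$ by $\tau$ adds no empirical content beyond what was already in the untransformed path.

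Finally, I would contrast this with the prespecified case. If $\lambda$ is fixed from independent event data before the outcome is seen, the identity $y=f\circ\tau$ becomes a genuine restriction that can fail. This motivates the paper's requirement that $\lambda$ be estimated without using the outcome.
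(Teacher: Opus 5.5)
Your argument is the paper's argument. Strict positivity makes $\tau$ invertible, so $y(t)$ can be rewritten as $y(t(\tau))$, a post-hoc $\lambda$ can then warp it toward a target shape, and content returns only when $\lambda$ is fixed independently of the outcome. Your explicit construction $\tau=f^{-1}\circ y$ makes the paper's claim that ``a desired shape'' can be manufactured precise, and your caveat usefully qualifies it: a time change cannot alter the order of visited states, so the free clock absorbs exactly the reparameterization class (strictly, the absolutely continuous reparameterizations, since $\tau=\int\lambda$).
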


\begin{proof}
Strict positivity makes \cref{eq:clock} monotone and therefore invertible.
Any trajectory $y(t)$ can then be written
$\widetilde y(\tau)=y(t(\tau))$. If $\lambda$ may depend arbitrarily on the
realized $y$, a desired shape for $\widetilde y$ can be manufactured by time
warping. Empirical content requires $\lambda$ to be constrained by independent
measurements, fixed on training data, or preregistered before the target
trajectory is evaluated.
\end{proof}

\Cref{prop:reparam} is a central safeguard. Successful visual alignment after
post-hoc time warping does not support TCU. Support requires improvement on
held-out data or invariance across contexts under one prespecified measurement
rule.

\subsection{Operational translations}

Several evocative terms in the motivating framework can be retained if they
are tied to observables. \Cref{tab:translation} states those translations and
their limits.

\begin{table}[t]
\centering
\caption{From philosophical language to operational constructs.}
\label{tab:translation}
\begin{tabularx}{\textwidth}{P{0.19\textwidth}YY}
\toprule
Motivating term & Operational construct & Required caution\\
\midrule
Causal unfolding & Accumulated, prespecified event intensity $\tau$ &
The event set and weights cannot be selected to fit the outcome.\\
Causal direction & Phase or orientation $\theta_i$ plus a signed response to
an intervention &
Temporal precedence alone does not establish causation.\\
Causal condensation & A measured transition from dispersed to persistent
state concentration &
This is classical state locking, not quantum wave-function collapse.\\
Causal freezing & Long recovery time or strong anchoring after a perturbation &
The term is an analogy, not the quantum Zeno effect.\\
Causal depletion & High alignment together with low response capacity to novel
input &
High consensus alone can be adaptive and does not imply depletion.\\
Social phase transition & A parameter-dependent qualitative change in order
parameters &
Finite social systems may show smooth crossovers rather than thermodynamic
singularities.\\
\bottomrule
\end{tabularx}
\end{table}

\subsection{Causal effects versus influence weights}

Modern causal inference defines effects through interventions, counterfactuals,
or structural assumptions \citep{pearl2009,woodward2003}. A network edge
estimated from communication or similarity is not, by itself, a causal
relation. In what follows, $W_{ij}$ is therefore called an \emph{influence
weight}. It may be interpreted causally only when an identification design
supports that interpretation---for example randomized exposure, a credible
natural experiment, or a fully stated structural causal model. This
distinction prevents the framework's ontological use of \emph{causal} from
licensing unwarranted empirical conclusions.

\section{A Stochastic Phase--Activation Model}
\label{sec:model}

\subsection{State variables and dynamics}

Consider $N$ units, such as individuals, communities, or institutions. Unit
$i$ has an orientation $\theta_i(\tau)\in[-\pi,\pi)$ and nonnegative activation
$a_i(\tau)$. The complex notation
$\psi_i=a_i e^{\mathrm{i}\theta_i}$ is a compact representation; it is not a
quantum state. The proposed dynamics are
\begin{align}
\dd\theta_i
  ={}& \left[
    \nu_i
    + K\sum_{j=1}^{N} W_{ij}a_j
      \sin(\theta_j-\theta_i)
    + \boldsymbol b_i^\top\boldsymbol x(\tau)
    - q_i\sin(\theta_i-\theta_i^{\mathrm{ref}})
  \right]\dd\tau
  +\sqrt{2D_i}\,\dd B_i(\tau),
\label{eq:phase}\\
\frac{\dd a_i}{\dd\tau}
  ={}& a_i\left[\alpha_i+\gamma_i I_i(\tau)-\beta_i a_i\right],
\qquad \beta_i>0 .
\label{eq:amplitude}
\end{align}
Here $W_{ij}\geq0$ is a fixed or slowly varying influence matrix, scaled so
that its spectral radius has a stated value; $K$ is global coupling; $\nu_i$
is intrinsic drift; $\boldsymbol x$ is external input; $q_i$ anchors a unit to
a reference state; $D_i$ is a diffusion coefficient; and $B_i$ are standard
Brownian motions. Equation~\eqref{eq:amplitude} is logistic activation with
input. It avoids interpreting $a_i$ as a probability amplitude.

If the model is simulated in calendar time, the stochastic time-change rule
gives
\begin{equation}
  \dd\theta_i =
  \lambda(t)f_i(\boldsymbol\theta,\boldsymbol a,t)\,\dd t
  +\sqrt{2D_i\lambda(t)}\,\dd \widetilde B_i(t),
  \label{eq:timechange}
\end{equation}
where $f_i$ is the drift in brackets in \cref{eq:phase}. Multiplying the drift
but not the diffusion by $\lambda$ would be an inconsistent change of clock.

\subsection{Dimensions and identifiability}

The phase is dimensionless. If $\tau$ is measured in causal-event units,
$\nu_i$, $K W_{ij}a_j$, $\boldsymbol b_i^\top\boldsymbol x$, and $q_i$ all have
units of inverse causal-event units, while $D_i$ has the same inverse unit.
If $a_i$ is dimensionless, $\alpha_i$, $\gamma_i I_i$, and $\beta_i a_i$ also
have units of inverse causal-event units.

Only products such as $KW_{ij}a_j$ are identified without normalization.
Consequently, $W$, $a$, and $K$ require explicit scale conventions. In
addition, simultaneously estimating a highly flexible $\lambda(t)$ and a
highly flexible dynamical drift is generally nonidentifiable. The empirical
protocol in \cref{sec:empirical} fixes or cross-fits the clock before estimating
the state dynamics.

\subsection{Order parameters: alignment is not polarization}

For nonnegative measurement weights $\omega_i$ with
$\sum_i\omega_i=1$, define circular harmonics
\begin{equation}
  Z_m(\tau)=\sum_{i=1}^{N}\omega_i e^{\mathrm{i}m\theta_i(\tau)},
  \qquad r_m(\tau)=|Z_m(\tau)|,\quad m=1,2.
  \label{eq:harmonics}
\end{equation}
The first harmonic $r_1$ measures one-cluster alignment. The second harmonic
$r_2$ is large for either one cluster or two opposing clusters. A simple
bipolarity diagnostic is
\begin{equation}
  P(\tau)=\max\{0,r_2(\tau)-r_1(\tau)\}.
  \label{eq:polarization}
\end{equation}
Thus a uniform distribution has $r_1\approx r_2\approx0$, consensus has
$r_1\approx r_2\approx1$ and $P\approx0$, while balanced opposition has
$r_1\approx0$, $r_2\approx1$, and $P\approx1$. No threshold such as
$\langle|\theta_i|\rangle>\pi/4$ is invariant to rotation of the angular
origin. More elaborate applications should compare \cref{eq:polarization}
with distributional and group-based measures of polarization
\citep{esteban1994,bramson2017}.

Activation is summarized by
$\bar a=N^{-1}\sum_i a_i$. A depleted system additionally requires a
low response capacity. For a standardized novel perturbation of size
$\varepsilon$ at $\tau_0$, define
\begin{equation}
  C_h=\frac{1}{N\varepsilon}\sum_{i=1}^{N}
  \left|\wrap\!\left(
  \theta_i^{(\varepsilon)}(\tau_0+h)
  -\theta_i^{(0)}(\tau_0+h)\right)\right|.
  \label{eq:capacity}
\end{equation}
High $r_1$ and low $C_h$, not high $r_1$ alone, operationalize rigidity or
depletion.

\subsection{Conditional synchronization threshold}

The full network model has no universal critical coupling. An analytical
benchmark is available under restrictive assumptions: $a_i=1$, $q_i=0$,
all-to-all weights $W_{ij}=1/N$, common diffusion $D$, no external drive, and
intrinsic drifts distributed according to a Lorentzian with half-width
$\Delta$. In the continuum limit, the first-harmonic amplitude obeys
\citep{strogatz2000,acebron2005,ott2008}
\begin{equation}
  \frac{\dd r_1}{\dd\tau}
  =\left[\frac{K}{2}-(\Delta+D)\right]r_1
   -\frac{K}{2}r_1^3 .
  \label{eq:meanfield}
\end{equation}

\begin{proposition}[Conditional onset]
\label{prop:threshold}
Under the assumptions above, the incoherent state loses stability at
\begin{equation}
  K_c=2(\Delta+D),
  \label{eq:kc}
\end{equation}
and for $K>K_c$ the stable nonzero branch is
$r_1^*=\sqrt{1-K_c/K}$.
\end{proposition}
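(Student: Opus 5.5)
The plan is to take the reduced mean-field equation \cref{eq:meanfield} as given, since the paper imports it from the Ott--Antonsen reduction, and do a one-dimensional bifurcation analysis of
\[
F(r_1)=\Bigl[\tfrac{K}{2}-(\Delta+D)\Bigr]r_1-\tfrac{K}{2}r_1^3
\]
on $r_1\in[0,1]$. Briefly, I would justify the reduction as follows. With common diffusion $D$, the Fokker--Planck equation for the phase density admits the Poisson-kernel (Ott--Antonsen) ansatz. On this invariant manifold the Lorentzian drift distribution lets the frequency integral be evaluated by residues at $\nu=\omega_0-\mathrm{i}\Delta$. Each Fourier mode is then damped at rate $\Delta$ from heterogeneity and rate $D$ from noise (exactly for the first harmonic). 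The result is $\dot Z_1=-(\Delta+D)Z_1+\tfrac{K}{2}(Z_1-|Z_1|^2Z_1)$, whose modulus gives \cref{eq:meanfield}.

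First, linear stability of incoherence. $r_1=0$ is always an equilibrium, and $F'(0)=K/2-(\Delta+D)$. This derivative is negative for $K<2(\Delta+D)$ and positive for $K>2(\Delta+D)$. The incoherent state is therefore linearly stable below $K_c=2(\Delta+D)$ and unstable above it, which gives \cref{eq:kc}. It is worth noting that perturbations of $Z_1$ off the real axis obey the same linear rate, because the linearization at $Z_1=0$ is rotation-invariant. So stability is decided by this single eigenvalue.

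Second, the nontrivial branch. Dividing $F(r_1)=0$ by $r_1\neq0$ gives $r_1^2=1-2(\Delta+D)/K=1-K_c/K$. This is positive exactly when $K>K_c$, giving $r_1^*=\sqrt{1-K_c/K}\in(0,1)$.

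Third, stability of the branch. Using the equilibrium relation $K/2-(\Delta+D)=\tfrac{K}{2}r_1^{*2}$, I compute
\[
F'(r_1^*)=\tfrac{K}{2}r_1^{*2}-\tfrac{3K}{2}r_1^{*2}=-K r_1^{*2}<0,
\]
so the branch is stable. It is a supercritical pitchfork because the cubic coefficient $-K/2$ is negative. The phase direction is neutral, reflecting rotational symmetry, and this is not an instability.

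The main obstacle is not the algebra but the status of the reduction. Strictly, the Ott--Antonsen manifold with noise is exact only for the first harmonic of the Lorentzian case. Global attraction to it and the continuum limit $N\to\infty$ are assumptions inherited from the cited literature. I would state the proposition as conditional on \cref{eq:meanfield}, as the paper's "under the assumptions above" indicates, rather than reprove these facts. As a consistency check, I would confirm that $D=0$ recovers Kuramoto's $K_c=2\Delta$ and $\Delta=0$ recovers the noisy identical-oscillator threshold $K_c=2D$.
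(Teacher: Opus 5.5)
Your argument is the paper's proof: you linearize \eqref{eq:meanfield} at $r_1=0$ to get the growth rate $K/2-(\Delta+D)$, divide out $r_1$ to obtain the branch, and read its stability from the negative cubic coefficient, which your explicit computation $F'(r_1^*)=-Kr_1^{*2}<0$ makes concrete. One correction to your justification sketch: for $D>0$ the Poisson-kernel ansatz is not an invariant manifold of the Fokker--Planck equation (diffusion damps the $n$th Fourier mode at rate $n^2D$, whereas $f_n=\alpha^n$ would need rate $nD$), so only the linear rate is exact, and the cubic term---hence the branch $\sqrt{1-K_c/K}$---rests on a closure, which is exactly why conditioning on \eqref{eq:meanfield}, as both you and the paper do, is needed.
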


\begin{proof}
Linearizing \cref{eq:meanfield} at $r_1=0$ gives growth rate
$K/2-(\Delta+D)$. It changes sign at \cref{eq:kc}. Setting the right-hand
side of \cref{eq:meanfield} to zero yields the nonzero branch; the cubic
coefficient makes it stable when it exists.
\end{proof}

For a general nonnegative network without row normalization, the leading
spectral mode changes the onset approximately to
$K_c\simeq2(\Delta+D)/\rho(W)$, subject to the mean-field assumptions used in
the network synchronization literature \citep{restrepo2005,arenas2008}.
Degree heterogeneity, correlated drifts, delays, adaptive edges, finite size,
or non-sinusoidal influence can shift or remove this transition. The earlier
formula $2\langle D\rangle/\langle A\rangle$ is therefore not retained.

\subsection{Activation and anchoring}

For constant input $I_i$, \cref{eq:amplitude} has equilibria
\begin{equation}
  a_i^*=0,\qquad
  a_i^*=\frac{\alpha_i+\gamma_i I_i}{\beta_i}
  \quad\text{when }\alpha_i+\gamma_i I_i>0.
  \label{eq:amplitudeeq}
\end{equation}
The positive branch is locally stable. Activation can amplify social
influence through the factor $a_j$ in \cref{eq:phase}, but it is conceptually
distinct from directional alignment.

The anchoring term makes freezing testable without quantum terminology. When
$q_i$ is large relative to bounded countervailing input and coupling, the
state remains near $\theta_i^{\mathrm{ref}}$ and the exit time from that
neighborhood increases. Whether repeated evaluation raises $q_i$ is an
empirical psychological or organizational hypothesis, not a consequence of
quantum measurement.

\section{Numerical Illustrations}
\label{sec:numerics}

\begin{figure}[t]
\centering
\includegraphics[width=\textwidth]{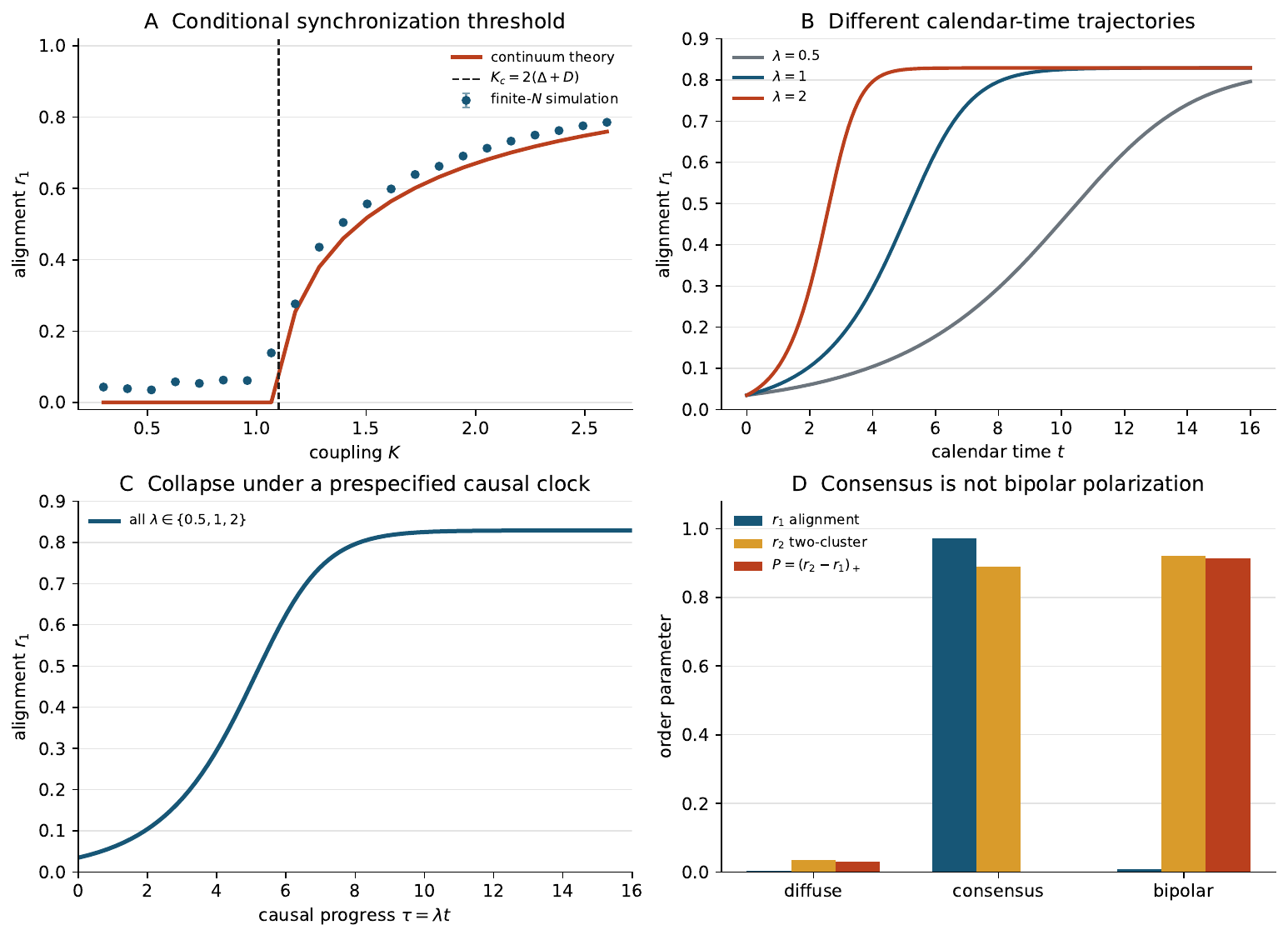}
\caption{Reproducible illustrations, not empirical fits. (A) A finite
$N=600$ Euler--Maruyama simulation of the all-to-all noisy Kuramoto special
case with $\Delta=0.40$, $D=0.15$, step size $0.015$, and fixed seed, compared
with \cref{prop:threshold}. Finite size and truncated Lorentzian tails produce
a small precritical baseline. (B) The same mean-field growth curve appears
at different rates in calendar time when $\lambda\in\{0.5,1,2\}$. (C) The
curves coincide when plotted against the prespecified
$\tau=\lambda t$. (D) First and second harmonics distinguish diffuse,
consensus, and bipolar states. Code is included as an ancillary file.}
\label{fig:numerics}
\end{figure}

\Cref{fig:numerics} checks internal implications of the model. Panel A
recovers the conditional onset predicted by \cref{eq:kc}; it does not show
that any observed society is a Kuramoto system. Panels B and C integrate
\cref{eq:meanfield} in a regime with positive linear growth under three
constant event rates. The collapse in panel C is guaranteed because the
curves were generated in $\tau$. In real data it becomes informative only
if $\lambda$ was specified independently, as required by
\cref{prop:reparam}. Panel D shows why a single synchrony statistic can
misclassify opposition: both consensus and balanced bipolarity have a large
second harmonic, but only consensus has a large first harmonic.

The simulation uses a deterministic grid of Lorentzian quantiles, clips only
the numerically extreme tails, and averages post-burn-in order parameters.
The fixed seed and complete script permit exact regeneration. No historical
case data enter the figure.

\section{Historical Episodes as Scope Probes}
\label{sec:cases}

The motivating framework discussed six global episodes. Redescription of
these episodes cannot validate the model: a flexible vocabulary can be fitted
to almost any sequence after the fact. They are more useful as \emph{scope
probes} that expose measurement choices and possible counterevidence.
\Cref{tab:cases} converts each episode into a prospective design. The cited
studies establish relevant empirical context, not support for TCU.

\begin{table}[p]
\centering
\small
\caption{Prospective operationalizations for the six motivating episodes.
Each row is a research design sketch, not a result.}
\label{tab:cases}
\begin{tabularx}{\textwidth}{P{0.15\textwidth}P{0.22\textwidth}YY}
\toprule
Episode & Candidate clock fixed before outcome analysis &
State and observable & Evidence against the proposed account \\
\midrule
Black Lives Matter, especially 2020 &
Weighted cumulative count of independently verified local protests and
cross-community exposures; measurement informed by prior social-media protest
work \citep{freelon2018}. &
Community orientation, activation, $r_1$, $r_2$, geographic reach. &
Chronological time predicts held-out mobilization as well as or better than
$\tau$; inferred alignment is an artifact of platform sampling.\\
\addlinespace
Brexit referendum and implementation &
Prespecified sequence of campaign exposures, official votes, court decisions,
and implementation milestones \citep{hobolt2016}. &
Distributions of policy preference and institutional commitment; bipolarity
$P$. &
The clock merely encodes the known outcome, or preference changes are better
explained by stable covariates and shocks outside the event set.\\
\addlinespace
Arab uprisings, 2010--2012 &
Within-country protest events and verified cross-border media exposures, with
country-specific measurement models \citep{tufekci2012}. &
Participation, network reach, institutional response, cross-country
heterogeneity. &
One shared clock fails measurement invariance; country outcomes require
mechanisms absent from the phase model.\\
\addlinespace
COVID-19 communication and policy &
Separate epidemiological, policy, and information-event clocks rather than one
post-hoc composite \citep{cinelli2020}. &
Policy attitudes, protective behavior, vaccine attitudes, $r_1$ and $P$. &
Apparent synchronization disappears after accounting for disease incidence,
policy coercion, bots, or changing sample composition.\\
\addlinespace
South Korean candlelight protests, 2016--2017 &
Weekly protest events plus a separately coded sequence of legislative and
judicial steps \citep{kang2019}. &
Participation, impeachment support, institutional state, recovery after the
decision. &
The outcome follows institutional procedure without a detectable coupling
transition, or alternative event codings yield incompatible clocks.\\
\addlinespace
Metaverse attention, 2021--2023 &
Product releases, investment announcements, active-device adoption, and
developer activity measured separately \citep{dwivedi2022}. &
Attention activation $\bar a$, adoption, expectation orientation, response
capacity. &
Attention and implementation are not dynamically coupled, or a conventional
hype/adoption model predicts better with fewer degrees of freedom.\\
\bottomrule
\end{tabularx}
\end{table}

The table also shows why causal potential should not be treated as a single
latent energy. Protest events, institutional acts, disease incidence, media
exposure, investment, and adoption have different measurement processes. A
model may contain several clocks or a vector of event counts; combining them
requires a theory and validation data.

\section{Falsifiable Hypotheses and Empirical Protocol}
\label{sec:empirical}

\subsection{Hypotheses}

The operational framework yields the following hypotheses. Each is stated
with a failure condition.

\begin{description}
  \item[H1: cross-context clock invariance.] For processes governed by the same
  mechanism but exposed to different event rates, trajectories expressed in a
  preregistered $\tau$ are more invariant than trajectories in $t$.
  Failure occurs if held-out alignment or likelihood does not improve relative
  to chronological-time and flexible-time baselines.

  \item[H2: conditional coupling onset.] After estimating heterogeneity,
  diffusion, and network scale, sustained $r_1$ should emerge near the
  conditional threshold appropriate to the fitted model. Failure occurs if
  no parameter-stable onset appears or if null networks reproduce it.

  \item[H3: distinct consensus and bipolarity.] Episodes with two opposed
  clusters should show high $r_2$, low $r_1$, and high $P$, whereas consensus
  should show high $r_1$ and low $P$. Failure occurs if these observables do
  not distinguish independently labeled states.

  \item[H4: anchoring and intervention timing.] Controlling for exposure and
  selection, stronger anchoring $q_i$ should predict longer exit times after a
  standardized counter-message or institutional change. Failure occurs if
  estimated exit times are unrelated to independently measured anchoring.

  \item[H5: consensus is insufficient for depletion.] Low response capacity
  $C_h$, not high $r_1$ alone, should predict failure to adapt to novel input.
  Evidence that aligned groups adapt as readily as dispersed groups would
  reject the stronger consensus-causes-depletion claim while remaining
  consistent with the narrower model.
\end{description}

\subsection{Measurement and estimation workflow}

A credible test should be prospective or pseudo-prospective:
\begin{enumerate}
  \item Define the units, orientation scale, activation measure, network
  boundary, event types, and event weights without consulting the target
  outcome period.
  \item Estimate or cross-fit $\lambda(t\mid\mathcal H_t)$ on a training window.
  Record uncertainty in the resulting $\tau$ rather than treating it as exact.
  \item Establish that orientation measurements are comparable across groups
  and time. Compute $r_1$, $r_2$, $P$, $\bar a$, and, where possible, the
  perturbational response $C_h$.
  \item Fit at least three models: $M_0$, the same dynamics in chronological
  time; $M_1$, TCU dynamics under the fixed causal clock; and $M_2$, a flexible
  time-warp baseline with equal or penalized complexity.
  \item Compare preregistered held-out log likelihood, calibration, forecast
  error, and cross-context parameter stability. Use temporal block
  cross-validation rather than random shuffling.
  \item Repeat the analysis on degree-preserving, timestamp-shuffled, and
  outcome-shuffled nulls. Report sensitivity to event weights, missing nodes,
  bots, and platform sampling.
\end{enumerate}

The decisive comparison is not whether $M_1$ fits, but whether it predicts
better than $M_0$ without gaining the arbitrary flexibility of $M_2$.

\subsection{Identification of influence}

Interaction data commonly confound influence with homophily and shared
exposure. A descriptive fit of \cref{eq:phase} may estimate predictive
coupling, but it does not identify $W_{ij}$ as a causal effect. Causal
interpretation requires a design: randomized message exposure, exogenous
platform or policy changes, instrumental variables with defensible exclusion
restrictions, or longitudinal structural assumptions accompanied by
sensitivity analysis. If those conditions are absent, the paper should use
\emph{association network} and reserve \emph{causal} for the philosophical
level and the prespecified event clock.

\subsection{Ethical boundary}

Models of intervention timing and collective alignment could be used for
manipulation. Empirical work should favor aggregate reporting, data
minimization, informed consent where individual intervention occurs, and
independent ethical review. The framework does not justify steering public
opinion toward a preferred direction. Its scientifically legitimate use is
to describe, forecast with uncertainty, and test mechanisms.

\section{Relation to Existing Theory}
\label{sec:related}

TCU inherits its process orientation from accounts that treat becoming as
fundamental \citep{heidegger1962,whitehead1978}. It differs from physical
accounts of the temporal arrow \citep{reichenbach1956,prigogine1980,price1996}
by making no claim to derive macroscopic irreversibility from microphysics.
The causal-progress clock is closer to event rescaling than to a new physical
dimension \citep{brown2002}.

Mathematically, \cref{eq:phase} belongs to the family of coupled-oscillator
models \citep{kuramoto1984,acebron2005,arenas2008}; the order parameter and
conditional threshold are established results used here as a disciplined
substrate. The proposed contribution is their integration with a separately
measured event clock, an activation layer, anchoring, and explicit
philosophical semantics. Threshold and cascade models
\citep{granovetter1978,watts2002,centola2007} may outperform oscillators when
adoption is discrete or reinforcement is complex. Such comparisons are part
of the falsification program, not competing views to be dismissed.

Social-dynamics work has long shown that averaging, network structure, and
selective exposure can produce consensus, fragmentation, or persistent
disagreement
\citep{degroot1974,friedkin1990,castellano2009,delvicario2016}. TCU does not
claim priority for these mechanisms. Its value, if borne out, would lie in
showing that a constrained causal-progress coordinate improves transfer across
episodes with different calendar speeds.

\section{Limitations and Boundary Conditions}
\label{sec:limits}

First, the ontological identity proposed by TCU is underdetermined by the
network model. The same equations can be interpreted without accepting the
philosophical thesis. Empirical success would support the operational clock,
not prove metaphysical identity.

Second, $\tau$ is scalar. Real systems may contain multiple asynchronous
processes---legal, epidemiological, economic, and informational---that cannot
be compressed without loss into one ordering. Partial orders or vector clocks
may be more appropriate.

Third, circular phase is suitable for orientations with periodic or
directional structure but not for every belief. Discrete threshold models,
bounded-confidence dynamics, or multidimensional latent spaces may be better
for particular domains.

Fourth, the mean-field critical value in \cref{eq:kc} is conditional. It
cannot be transferred to empirical networks by relabeling $D$ as noise and
$W$ as causality. Every application must state its normalization and estimate
uncertainty.

Fifth, historical analogies are vulnerable to hindsight, selection, and
survivorship bias. The six cases in \cref{tab:cases} have not been fitted and
do not validate the model. A future study should use event-level data and
precommitted alternatives.

Finally, terms such as condensation, freezing, and phase transition are useful
only when operational definitions accompany them. They do not imply quantum
superposition, wave-function collapse, or the quantum Zeno effect in human
cognition.

\section{Conclusion}
\label{sec:conclusion}

Temporal--causal unity begins from a strong intuition: time is experienced not
as an empty axis but through the directed realization of change. This paper
turns that intuition into a constrained scientific proposal by separating
interpretation from measurement and measurement from dynamics. The
causal-progress coordinate $\tau$ is meaningful only when its event intensity
is specified independently; the network model is meaningful only when its
states, scales, and influence design are explicit; historical episodes are
informative only when they generate risky predictions rather than retrospective
labels.

Under those constraints, TCU offers a coherent program for studying collective
tempo, alignment, polarization, anchoring, and adaptive capacity. Its next
step is not a broader metaphysical claim but a narrower empirical test:
preregister one clock, compare it with chronological and flexible-time
baselines, and evaluate prediction on held-out processes. Failure of causal
time to improve invariance would count against the operational framework.
Success would not replace established physics, but it would show that the
amount of causally relevant activity is a useful clock for collective
dynamics.

\section*{Acknowledgments and tool-use disclosure}

The conceptual framework originated with the human author. OpenAI Codex was
used to assist with English drafting, mathematical consistency checks, LaTeX
preparation, and simulation-code scaffolding. The human author must verify the
claims, references, code, and final wording and assumes responsibility for the
submitted work. Generative software is not an author.

\appendix

\section{Derivation and Invariance Notes}
\label{app:derivation}

\subsection{Mean-field fixed points}

Writing \cref{eq:meanfield} as
$\dot r_1=r_1[A-(K/2)r_1^2]$ with
$A=K/2-(\Delta+D)$ gives the incoherent fixed point $r_1=0$. If $A>0$,
the nonzero fixed point satisfies
\[
  r_1^2=\frac{2A}{K}=1-\frac{2(\Delta+D)}{K}.
\]
At $K=K_c$, the branch emerges continuously. This is a supercritical onset
for the stated special case; discontinuous or hysteretic transitions require
other distributions, coupling rules, adaptation, inertia, or correlations.

\subsection{Rotation invariance}

Under a global change of angular origin
$\theta_i\mapsto\theta_i+\phi$, one has
$Z_m\mapsto e^{\mathrm{i}m\phi}Z_m$, so $r_m$ and $P$ are unchanged. In
contrast, $\langle|\theta_i|\rangle$ changes with $\phi$ and is therefore not
an intrinsic measure of extremity on the circle.

\subsection{Clock scaling}

If all event weights are multiplied by $c>0$, then $\tau'=c\tau$. Parameters
with inverse-$\tau$ units transform as
$\nu_i'=\nu_i/c$, $K'=K/c$, $q_i'=q_i/c$, and $D_i'=D_i/c$.
Predictions are unchanged only if this scale convention is propagated. A
paper must therefore report the event-weight normalization before comparing
parameters across datasets.

\section{Traceability to the Motivating Hypotheses}
\label{app:traceability}

\begin{table}[h]
\centering
\small
\caption{Status of the twelve motivating hypotheses after operationalization.}
\label{tab:trace}
\begin{tabularx}{\textwidth}{P{0.09\textwidth}P{0.34\textwidth}Y}
\toprule
Original & Motivating statement & Status in the present framework \\
\midrule
H1 & Time and causation are two aspects of one essence. &
Retained as TCU-1, explicitly interpretive rather than proven.\\
H2 & Space is a synchronic slice of temporal--causal unfolding. &
Retained only as a descriptive state-slice interpretation; no elimination of
physical space.\\
H3 & Causation is the tangent vector of temporal unfolding. &
Replaced by the measurable event intensity and directed state drift; not a
differential-geometric claim.\\
H4 & Cognition is a complex amplitude on a time--cause plane. &
Recast as the classical state $\psi_i=a_i e^{\mathrm{i}\theta_i}$; no quantum
meaning.\\
H5 & Temporal--causal unfolding obeys a dynamical equation. &
Retained as the explicitly scoped stochastic model
\eqref{eq:phase}--\eqref{eq:amplitude}.\\
H6 & Synchrony $r$ is the system order parameter. &
Refined to $r_1$, $r_2$, $P$, activation, and response capacity.\\
H7 & Cognition is subjectivized temporal--causal unfolding. &
Retained as interpretation; empirical units require validated cognitive
measures.\\
H8 & Information is a medium of causal potential. &
Replaced by measured input $\boldsymbol x$, event intensity $\lambda$, and
identified or descriptive influence.\\
H9 & An information cocoon is a fixed unfolding direction. &
Recast as persistent concentration plus selective exposure and low response
capacity.\\
H10 & Society is an interwoven temporal--causal network. &
Operationalized by $W$ and timestamped interactions, without assuming every
edge is causal.\\
H11 & Consensus is synchronization. &
Partially retained; harmonics distinguish consensus from bipolar order.\\
H12 & Social change is a phase transition. &
Retained as a conditional model hypothesis, not a universal description.\\
\bottomrule
\end{tabularx}
\end{table}

\bibliographystyle{plainnat}
\bibliography{references}

\end{document}